\newtheorem{theorem}{Theorem}[section]
\newtheorem{remark}{Remark}
\newcommand{\bt}{\beta}
\newcommand{\be}{\begin{equation}}
	\newcommand{\ee}{\end{equation}}
\newcommand{\bea}{\begin{eqnarray}}
	\newcommand{\eea}{\end{eqnarray}}
\newcommand{\no}{\nonumber}
\numberwithin{equation}{section}
\begin{document}
\title{The discrete Painlev\'{e} XXXIV hierarchy arising from the gap probability distributions of Freud random matrix ensembles}
\author{Chao Min\thanks{School of Mathematical Sciences, Huaqiao University, Quanzhou 362021, China; Email: chaomin@hqu.edu.cn}\: and Liwei Wang\thanks{School of Mathematical Sciences, Huaqiao University, Quanzhou 362021, China}}
\date{January 16, 2025}
\maketitle
\begin{abstract}
We consider the symmetric gap probability distributions of certain Freud unitary ensembles. This problem is related to the Hankel determinants generated by the Freud weights supported on the complement of a symmetric interval. By using Chen and Ismail's ladder operator approach, we obtain the difference equations satisfied by the recurrence coefficients for the orthogonal polynomials with the discontinuous Freud weights. We find that these equations, with a minor change of variables, are the discrete Painlev\'{e} XXXIV hierarchy proposed by Cresswell and Joshi [{\em J. Phys. A: Math. Gen.} {\bf 32} ({1999}) {655--669}]. This is the first time that the discrete Painlev\'{e} XXXIV hierarchy appears in the study of Random Matrix Theory. We also derive the differential-difference equations for the recurrence coefficients and show the relationship between the logarithmic derivative of the gap probabilities, the nontrivial leading coefficients of the monic orthogonal polynomials and the recurrence coefficients.
\end{abstract}

$\mathbf{Keywords}$: Freud unitary ensembles; Gap probabilities; Orthogonal polynomials; Ladder operators;

Recurrence coefficients; Discrete Painlev\'{e} XXXIV hierarchy.

$\mathbf{Mathematics\:\: Subject\:\: Classification\:\: 2020}$: 60B20, 42C05.

\section{Introduction}
Consider a random matrix ensemble on the space of $n\times n$ Hermitian matrices $M$ with the probability distribution
$$
\frac{1}{Z_n}\mathrm{e}^{-\mathrm{Tr}\: \mathrm{v}_{0}(M)}dM,\qquad dM=\prod_{i=1}^{n}dM_{ii}\prod_{1\leq i< j\leq n}d\mathfrak{R}M_{ij}d\mathfrak{I}M_{ij},
$$
where $Z_n$ is the normalization constant. This distribution is invariant under all unitary transformations and induces a probability distribution on the eigenvalues $x_1, x_2, \ldots, x_n$ of $M$ given by \cite{Mehta}
	$$
	p\left(x_1, x_2, \ldots, x_n\right) \prod_{j=1}^n d x_j=\frac{1}{Z_n} \prod_{1 \leq i<j \leq n}(x_j-x_i)^2 \prod_{k=1}^n w_0(x_k) d x_k,\qquad x_1, x_2, \ldots, x_n\in \mathbb{R},
	$$
	where $ w_0(x):=\mathrm{e}^{-\mathrm{v}_{0}(x)}$ is a weight defined on $\mathbb{R}$, and all the moments
	$$
	\mu_j:=\int_{-\infty}^{\infty} x^j w_0(x) d x, \qquad j=0,1,2, \ldots
	$$
	exist.
	
	In this paper, we take $w_0(x)$ to be the Freud weight
	$$
	w_0(x)=\mathrm{e}^{-x^{2m}},\qquad m\in\mathbb{Z}^{+},\; x \in \mathbb{R},
	$$
	which corresponds to the Freud unitary ensemble. When $m=1$, it is the well-known Gaussian unitary ensemble (GUE). The probability that the interval $(- a, a) $ is free of  eigenvalues in the Freud unitary ensemble is given by
	\be\label{pnt1}
	\mathbb{P}(n,a)  =\frac{1}{Z_n} \int_{(-\infty, \infty)^n} \prod_{1 \leq i<j \leq n}(x_j-x_i)^2 \prod_{k=1}^n w_0(x_k)\chi_{(-a,a)^{c}}(x_k) d x_k,
	\ee
where $\raisebox{0.5ex}{$\chi$}_{I}(x)$ is the characteristic function of the interval $I$. The asymptotics of $\mathbb{P}(n,a)$ has been studied in \cite{CKM} under suitable double scaling as $n\rightarrow\infty$ for more general Freud unitary ensemble.

	By using Andr\'{e}ief's or Heine's integration formula \cite[p. 27]{Szego}, the $n$-fold integrals in (\ref{pnt1}) can be expressed as the Hankel determinants
\begin{align}
D_n(a):&=\det\left(\int_{-\infty}^{\infty} x^{i+j} w_0(x)\chi_{(-a,a)^{c}}(x) d x\right)_{i, j=0}^{n-1}\no\\
&=\frac{1}{n!}\int_{(-\infty, \infty)^n} \prod_{1 \leq i<j \leq n}(x_j-x_i)^2 \prod_{k=1}^n w_0(x_k)\chi_{(-a,a)^{c}}(x_k) d x_k.\no
\end{align}
In particular, $Z_n=n!D_n(0)$, and we have
	\be\label{pna}
	\mathbb{P}(n, a) = \frac{D_n(a)}{D_n(0)}.
	\ee

It is a known fact that $D_n(a)$ is equal to the product of the square of $L^2$ norms of the corresponding \textit{monic} orthogonal polynomials (see e.g. \cite[(2.1.6)]{Ismail}), i.e.,
	\be\label{hankel}
	D_n(a)=\prod_{j=0}^{n-1}h_{j}(a),
	\ee
	where $h_j(a)>0$ is defined by
	$$
	h_j(a)\delta_{jk}=\int_{-\infty}^{\infty}P_j(x;a)P_k(x;a)w(x;a)dx,\qquad j, k=0,1,2,\ldots.
	$$
	Here $\delta_{jk}$ denotes the Kronecker delta, and $P_j(x; a)$ are the monic polynomials of degree $j$, orthogonal with respect to the weight
\be\label{weight}
w(x;a):=w_0(x)\chi_{(-a,a)^{c}}(x),\qquad x \in \mathbb{R}.
\ee
	
	Since $w(x;a)$ is even, the orthogonal polynomials satisfy  the three-term recurrence relation of the form
	$$
	xP_{n}(x;a)=P_{n+1}(x;a)+\beta_{n}(a)P_{n-1}(x;a)
	$$
	with the initial conditions
	$$
	P_0(x;a)=1,\qquad \beta_0(a) P_{-1}(x;a)=0.
	$$
	It can be shown that the recurrence coefficient $\bt_n(a)$ has the expressions
	\be\label{be}
	\beta_{n}(a)=\frac{h_{n}(a)}{h_{n-1}(a)}
	\ee
and
	\begin{equation}\label{alpha1}
		\beta_{n}(a)=\mathrm{p}(n,a)-\mathrm{p}(n+1,a),
	\end{equation}
	where $\mathrm{p}(n, a)$ is the nontrivial leading coefficient of the monic polynomial $P_n(x; a)$, i.e.,
	$$
	P_n\left(x; a\right)=x^n+\mathrm{p}(n, a) x^{n-2}+\cdots,
	$$
	with the initial values $\mathrm{p}(0, a)=\mathrm{p}(1, a)=0$.

	Moreover, from (\ref{alpha1}) we have the identity
	\be\label{sum}
	\sum_{j=0}^{n-1}\bt_j(a)=-\mathrm{p}(n,a),
	\ee
and note that $\bt_0(a)=0$.
	The combination of (\ref{be}) and (\ref{hankel}) gives the relation between $\bt_n(a)$ and $D_n(a)$:
	$$
	\bt_n(a)=\frac{D_{n+1}(a)D_{n-1}(a)}{D_n^2(a)}.
	$$
See \cite{Chihara,Ismail,Szego} for more information about orthogonal polynomials.
	
The ladder operator approach developed by Chen and Ismail \cite{Chen1997} has been demonstrated to be very useful to analyze the recurrence coefficients of orthogonal polynomials; see, e.g., \cite{BC2009,Dai,Filipuk,Min2021,VanAssche}. Similarly as in \cite{BC2009} and noting that
$$
\chi_{(-a,a)^{c}}(x)=1+\theta(x-a)-\theta(x+a),
$$
where $\theta(x)$ is the Heaviside step function, i.e., $\theta(x)$ is $1$ for $x>0$ and $0$ otherwise,
 it can be shown that our monic orthogonal polynomials with the weight (\ref{weight}) satisfy the lowering and raising operator equations
	$$
	\left(\frac{d}{dx}+B_{n}(x)\right)P_{n}(x)=\beta_{n}A_{n}(x)P_{n-1}(x),
	$$
	$$
	\left(\frac{d}{dx}-B_{n}(x)-\mathrm{v}_0'(x)\right)P_{n-1}(x)=-A_{n-1}(x)P_{n}(x),
	$$
	where $\mathrm{v}_0(x)=-\ln w_0(x)$, and $A_{n}(x)$ and $B_{n}(x)$ are the functions defined by
	\be\label{an}
	A_{n}(x):=\frac{R_n(a)}{x^2-a^2}+\frac{1}{h_{n}(a)}\int_{-\infty}^{\infty}\frac{\mathrm{v}_0'(x)-\mathrm{v}_0'(y)}{x-y}P_{n}^{2}(y)w(y)dy,
	\ee
	\be\label{bn}
	B_{n}(x):=\frac{x\: r_n(a)}{x^2-a^2}+\frac{1}{h_{n-1}(a)}\int_{-\infty}^{\infty}\frac{\mathrm{v}_0'(x)-\mathrm{v}_0'(y)}{x-y}P_{n}(y)P_{n-1}(y)w(y)dy,
	\ee
	and $R_n(a)$ and $r_n(a)$ are the auxiliary quantities given by
	\be\label{Rnt}
	R_n(a):=\frac{2a w_0(a)}{h_n(a)}P_n^2(a; a),
	\ee
	\be\label{rnt}
	r_n(a):= \frac{2 w_0(a)}{h_{n-1}(a)} P_n(a;a) P_{n-1}(a;a).
	\ee
	We often suppress the $a$-dependence of many quantities such as $\bt_n$ and the auxiliary quantities for brevity, and we believe that this will not lead to any confusion.
	
	It can be proved that the following compatibility conditions for $A_n(x)$ and $B_n(x)$ hold from their definitions in (\ref{an}) and (\ref{bn}):
	\be
	B_{n+1}(x)+B_{n}(x)=x A_{n}(x)-\mathrm{v}_0'(x), \tag{$S_{1}$}
	\ee
	\be
	1+x(B_{n+1}(x)-B_{n}(x))=\beta_{n+1}A_{n+1}(x)-\beta_{n}A_{n-1}(x), \tag{$S_{2}$}
	\ee
	\be
	B_{n}^{2}(x)+\mathrm{v}_0'(x)B_{n}(x)+\sum_{j=0}^{n-1}A_{j}(x)=\beta_{n}A_{n}(x)A_{n-1}(x). \tag{$S_{2}'$}
	\ee
	Here ($S_{2}'$) is called the sum rule, which is obtained from a suitable combination of ($S_{1}$) and ($S_{2}$).

By using the above compatibility conditions, we obtain some identities for the recurrence coefficients and the auxiliary quantities, from which we derive the difference equations satisfied by the recurrence coefficients of orthogonal polynomials with the weight (\ref{weight}). We find that these equations are closely related to the discrete Painlev\'{e} XXXIV hierarchy \cite{Cresswell}. The relationship between the recurrence coefficients of semi-classical orthogonal polynomials and discrete Painlev\'{e} equations was first observed by Fokas, Its and Kitaev \cite{Fokas}, and see also \cite{Boelen,DM,Fokas2,Magnus,MinJMP,VanAssche} for reference. In addition, the relationship has also been studied based on Sakai's geometric theory of discrete Painlev\'{e} equations in recent years; see, e.g., \cite{Dzhamay,DFS,Hu,Hu2}. We also mention that Yue, Chang and Hu \cite{YCH} proposed a new method called ``stationary reduction method based on nonisospectral deformation of orthogonal polynomials'' for deriving discrete Painlev\'{e}-type equations very recently.

Taking a derivative with respect to $a$ in the equality
	$$
	h_n(a)=\int_{-\infty}^{\infty}P_n^2(x;a)w_0(x)\chi_{(-a,a)^{c}}(x) dx,
	$$
	we find
	\be\label{dhn}
	a\frac{d}{da}\ln h_n(a)=-R_n(a).
	\ee
	It follows from (\ref{be}) that we have the Toda type equation
	\be\label{eq3}
	a\beta_{n}'(a)=\beta_n(a) (R_{n-1}(a)- R_n(a)).
	\ee
This equation can be used to derive the differential-difference equation satisfied by the recurrence coefficient $\bt_n(a)$.

Let $\sigma_n(a)$ be the quantity associated with the logarithmic derivative of the gap probability, i.e.,
$$
\sigma_n(a):=a\frac d{da}\ln\mathbb{P}(n,a).
$$
It can be seen from (\ref{pna}), (\ref{hankel}) and (\ref{dhn}) that
	\be\label{sig}
	\sigma_n(a)=a\frac d{da}\ln D_n(a)=-\sum_{j=0}^{n-1}R_j(a).
	\ee
The identity (\ref{sig}) will play an important role in establishing the relation of $\sigma_n(a)$ to the recurrence coefficient $\bt_n(a)$.

In this paper, we mainly consider the symmetric gap probability (\ref{pnt1}) of the Freud unitary ensemble in the $m=1, 2$ and $3$ cases. Our main result is that the logarithmic derivative of the gap probability can be expressed in terms of the recurrence coefficient $\bt_n$ explicitly, where $\bt_n$ (with a minor change of variables) satisfies the discrete Painlev\'{e} XXXIV hierarchy. It will be seen that our method can be used to study the higher $m$ cases; however, the results will become more and more complicated.
	\section{Gaussian unitary ensemble ($m=1$)}
	We first consider the simplest case for $m=1$. The weight function $w(x)$ now is
	\begin{equation*}\label{wei1}
		w(x;a)=w_0(x)\chi_{(-a,a)^{c}}(x),\qquad x \in \mathbb{R},
	\end{equation*}
	where $w_0(x)$ is the Gaussian weight
	$$
	w_0(x)=\mathrm{e}^{-x^{2}},\qquad x \in \mathbb{R}.
	$$
It is a celebrated discovery of Jimbo, Miwa, M\^{o}ri and Sato (JMMS) \cite{JMMS} that the gap probability of GUE in (\ref{pnt1}) can be expressed in terms of a Painlev\'{e} V transcendent by scaling in the bulk of the spectrum.
This gap probability has also been studied by Cao, Chen and Griffin \cite{cao} using the ladder operator approach. Unfortunately, there were some mistakes in \cite{cao}, which have been corrected in the work of Lyu, Chen and Fan \cite{LCF}. See also \cite{TW,Witte}.

    Following \cite{cao,LCF} and noting that $\mathrm{v_0}(x)=-\ln w(x_0)=x^2$, we find from (\ref{an}) and (\ref{bn}) that
	\be\label{anz1}
	A_n(x)
	=\frac{R_n(a)}{x^2-a^2}+2,
	\ee
	\be\label{bnz1}
	B_n(x) =\frac{x\: r_n(a)}{x^2-a^2},
	\ee
	where $R_n(a)$ and $r_n(a)$ are given by (\ref{Rnt}) and (\ref{rnt}), respectively.
	
	Next we use the compatibility conditions satisfied by $A_n(x)$ and $B_n(x)$ to analyze the recurrence coefficient $\bt_n$ and the auxiliary quantities. Substituting (\ref{anz1}) and (\ref{bnz1}) into ($S_{1}$), we obtain
	\be\label{s1}
	r_{n}+r_{n+1}=R_{n}.
	\ee
	Similarly, substituting (\ref{anz1}) and (\ref{bnz1}) into ($S_{2}'$) gives
	\be\label{s2}
	r_{n}=2\beta_{n}-n,
	\ee
	\be\label{s3}
	a^2 r_{n}^{2}=\beta_{n}R_{n-1}R_{n},
	\ee
	\be\label{s4}
	r_{n}^{2}+2a^2r_{n}+\sum_{j=0}^{n-1}R_{j}=2\beta_{n}(R_{n-1}+R_{n}).
	\ee

Based on the above identities, we have the following results.
	\begin{theorem}\label{th1}
		The recurrence coefficient $\bt_n$ satisfies the second-order nonlinear difference equation
	\begin{equation}\label{de1}
	a^2(2\beta_n-n)^2=\beta_n(2\beta_{n-1}+2\beta_n-2n+1)(2\beta_n+2\beta_{n+1}-2n-1).
	\end{equation}
Let $w_n=\frac{4\beta_{n}}{a^2}$. Then, the difference equation (\ref{de1}) becomes the discrete Painlev\'{e} XXXIV equation \cite[(4.2)]{Cresswell} (see also \cite[(5)]{Joshi})
    	\be\label{dp34}
    	(w_{n+1}+w_n-z_{n+1})(w_n+w_{n-1}-z_n)=\frac{(2w_n-C_3-z_n)(2w_n+C_3-z_{n+1})}{w_n},
    	\ee
    	where $z_n=C_1+C_2 n,\; C_1=-\frac{2}{a^2},\; C_2=\frac{4}{a^{2}}$ and $C_3=\frac{2}{a^{2}}$.
    \end{theorem}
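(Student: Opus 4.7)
The plan is to combine the three identities (\ref{s1}), (\ref{s2}), and (\ref{s3}) to eliminate the auxiliary quantities $r_n$ and $R_n$ in favor of $\beta_n$ alone. First, I would use (\ref{s2}) to write $r_n = 2\beta_n - n$ and $r_{n+1} = 2\beta_{n+1} - (n+1)$, and then substitute these into (\ref{s1}) to obtain the closed-form expression
\[
R_n = r_n + r_{n+1} = 2\beta_n + 2\beta_{n+1} - 2n - 1,
\]
along with its shifted companion $R_{n-1} = 2\beta_{n-1} + 2\beta_n - 2n + 1$.

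Next, I would substitute these expressions for $r_n$, $R_n$, and $R_{n-1}$ directly into (\ref{s3}), which reads $a^2 r_n^2 = \beta_n R_{n-1} R_n$. This single substitution immediately produces the difference equation (\ref{de1}), completing the proof of the first claim. Note that identity (\ref{s4}) is not required here; it plays only a consistency role and is already implied by (\ref{s1})--(\ref{s3}) through the sum rule $(S_2')$.

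For the reduction to the discrete Painlev\'{e} XXXIV equation, I would introduce $w_n = 4\beta_n/a^2$ and rewrite each building block of (\ref{de1}) as
\[
2\beta_n - n = \tfrac{1}{2}(a^2 w_n - 2n), \qquad R_n = \tfrac{a^2}{2}(w_n + w_{n+1} - z_{n+1}), \qquad R_{n-1} = \tfrac{a^2}{2}(w_{n-1} + w_n - z_n),
\]
where $z_n = -2/a^2 + 4n/a^2$. Dividing the resulting identity by $\beta_n$ and by the appropriate power of $a$ produces the left-hand side of (\ref{dp34}) directly. To match the right-hand side, I would verify that with $C_3 = 2/a^2$ both factors $2w_n - C_3 - z_n$ and $2w_n + C_3 - z_{n+1}$ collapse to the same expression $2w_n - 4n/a^2$, so that their product equals $\frac{4(a^2 w_n - 2n)^2}{a^4}$, which reproduces exactly $a^2 r_n^2/\beta_n$ under the substitution.

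The main obstacle is essentially careful bookkeeping: identifying the correct Cresswell--Joshi parameters $C_1, C_2, C_3$ and observing that the asymmetric-looking numerator on the right of (\ref{dp34}) degenerates into a perfect square in this $m=1$ case. This degeneration is what allows the two-parameter family dP$_{34}$ to reproduce the relatively rigid (\ref{de1}). Beyond this algebraic verification, the proof is a direct chain of substitutions from the compatibility conditions.
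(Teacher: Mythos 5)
Your proposal is correct and follows exactly the paper's route: eliminate $r_n$ via $(S_2')$, get $R_n=2\beta_n+2\beta_{n+1}-2n-1$ from $(S_1)$, and substitute into $a^2r_n^2=\beta_nR_{n-1}R_n$; the paper's proof is just a terser version of the same chain. Your explicit verification of the change of variables (in particular that both numerator factors of (\ref{dp34}) collapse to $2w_n-4n/a^2$) is a correct detail the paper omits.
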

	\begin{proof}
	From (\ref{s1}) and (\ref{s2}) we have
	\begin{equation}\label{s5}
	R_n=2\beta_n+2\beta_{n+1}-2n-1.
	\end{equation}
	Substituting (\ref{s2}) and(\ref{s5}) into  (\ref{s3}), we obtain equation (\ref{de1}). Under the given transformation, we arrive at equation (\ref{dp34}).
	\end{proof}
\begin{theorem}\label{th2}
	The recurrence coefficient $\bt_n$ satisfies the  differential-difference equation
	\begin{equation}\label{dd1}
	a \beta_{n}'(a)=2\beta_{n}(\bt_{n-1}-\bt_{n+1}+1).
	\end{equation}
	\end{theorem}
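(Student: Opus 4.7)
The plan is to combine the Toda-type equation (\ref{eq3}), which says $a\beta_n'(a)=\beta_n(R_{n-1}-R_n)$, with the explicit algebraic expression for $R_n$ in terms of the recurrence coefficients that was obtained in the proof of Theorem \ref{th1}. Since equation (\ref{eq3}) is derived directly from the definition of $h_n(a)$ by $a$-differentiation and the ratio formula (\ref{be}), it is already available for free and requires no further work here.

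The key step is to invoke (\ref{s5}), namely $R_n=2\beta_n+2\beta_{n+1}-2n-1$, which comes from substituting the sum rule identities (\ref{s1}) and (\ref{s2}) into each other. Shifting the index by one gives $R_{n-1}=2\beta_{n-1}+2\beta_n-2n+1$. Subtracting the two expressions causes the $2\beta_n$ terms and the $n$-dependent terms to collapse nicely, yielding
\begin{equation*}
R_{n-1}-R_n = 2\beta_{n-1}-2\beta_{n+1}+2.
\end{equation*}

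Finally, I would substitute this into (\ref{eq3}) and factor out a $2$ to obtain the claimed differential-difference equation $a\beta_n'(a)=2\beta_n(\beta_{n-1}-\beta_{n+1}+1)$. There is no real obstacle: the entire argument is a one-line consequence of the Toda equation and the closed-form expression for $R_n$ that was already established. The only thing one should double-check is that the index shift in (\ref{s5}) is consistent with the conventions used for $R_n$ in (\ref{Rnt}), but since (\ref{s5}) was derived for general $n$ via the compatibility condition $(S_1)$ this causes no issue.
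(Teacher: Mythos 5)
Your proposal is correct and is exactly the paper's argument: substitute the expression $R_n=2\beta_n+2\beta_{n+1}-2n-1$ from (\ref{s5}) (and its index-shifted version) into the Toda-type equation (\ref{eq3}) and simplify. The index bookkeeping $R_{n-1}-R_n=2\beta_{n-1}-2\beta_{n+1}+2$ is carried out correctly.
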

	\begin{proof}
	Substituting (\ref{s5}) into (\ref{eq3}), we obtain (\ref{dd1}).
	\end{proof}
\begin{remark}
The equation (\ref{dd1}) is similar in form to the Volterra lattice or the Langmuir lattice equation.
\end{remark}
	\begin{theorem}\label{th3}
	The nontrivial leading coefficient $\mathrm{p}(n,a)$ can be expressed in terms of the recurrence coefficient $\bt_n$ as follows:
	\begin{equation}\label{pn1}
	\mathrm{p}(n,a)=-\frac{1}{2}na^2-\beta_n\left(\beta_{n-1}+\beta_n+\beta_{n+1}-n-a^2-\frac{1}{2}\right).
	\end{equation}
    \end{theorem}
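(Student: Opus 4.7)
The plan is to compute the sum $\sum_{j=0}^{n-1}R_j$ in two different ways: once as a telescoping sum that brings in $\mathrm{p}(n,a)$ via the identity (\ref{sum}), and once via the sum rule (\ref{s4}) that yields an expression purely in terms of neighbouring recurrence coefficients. Equating the two expressions will give (\ref{pn1}).

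First I would telescope equation (\ref{s1}). Summing $r_j+r_{j+1}=R_j$ over $j=0,1,\ldots,n-1$ and using the initial value $r_0=2\beta_0=0$, one obtains
\begin{equation*}
\sum_{j=0}^{n-1}R_j=2\sum_{j=0}^{n-1}r_j+r_n.
\end{equation*}
Substituting $r_j=2\beta_j-j$ from (\ref{s2}) and then applying (\ref{sum}) to the sum of $\beta_j$'s gives
\begin{equation*}
\sum_{j=0}^{n-1}R_j=-4\,\mathrm{p}(n,a)-n^{2}+2\beta_n.
\end{equation*}

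Next I would rewrite the sum rule (\ref{s4}) as $\sum_{j=0}^{n-1}R_j=2\beta_n(R_{n-1}+R_n)-r_n^{2}-2a^{2}r_n$ and replace $r_n$ by $2\beta_n-n$ using (\ref{s2}) and $R_n$, $R_{n-1}$ by the expression (\ref{s5}) derived in the proof of Theorem \ref{th1}. After elementary simplification this produces
\begin{equation*}
\sum_{j=0}^{n-1}R_j=4\beta_n\bigl(\beta_{n-1}+\beta_n+\beta_{n+1}-n-a^{2}\bigr)-n^{2}+2a^{2}n.
\end{equation*}

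Equating the two expressions for $\sum_{j=0}^{n-1}R_j$ and solving for $\mathrm{p}(n,a)$ yields exactly (\ref{pn1}). There is no genuine obstacle here: the only mild bookkeeping is keeping track of the initial value $r_0=0$ when telescoping, and collecting the constant $\tfrac{1}{2}$ correctly when absorbing the $2\beta_n$ term from the telescoping identity into the bracket of the sum-rule expression. The whole proof reduces to these two algebraic manipulations combined with the previously established identities (\ref{s1})--(\ref{s5}), (\ref{sum}) and (\ref{sig}).
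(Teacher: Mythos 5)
Your proposal is correct and all three displayed identities check out; it is essentially the paper's own proof, merely reorganized so that the two evaluations of $\sum_{j=0}^{n-1}R_j$ are equated at the end, whereas the paper first writes $\mathrm{p}(n,a)$ in terms of $\sum_{j=0}^{n-1}R_j$ and then eliminates that sum via (\ref{s4}) — the same identities (\ref{sum}), (\ref{s1}), (\ref{s2}), (\ref{s4}), (\ref{s5}) are used in the same way. (Minor remark: (\ref{sig}) is not actually needed here.)
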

	\begin{proof}
	From (\ref{sum}), (\ref{s2}) and (\ref{s1}), we have
\be\label{pn}
\mathrm{p}(n,a)=-\frac{1}{2}\sum_{j=0}^{n-1}r_j-\frac{1}{4}n(n-1)=-\frac{1}{4}\sum_{j=0}^{n-1}R_j+\frac{r_n}{4}-\frac{1}{4}n(n-1).
\ee
Using (\ref{s4}) to eliminate $\sum_{j=0}^{n-1}R_j$ yields
$$
\mathrm{p}(n,a)=\frac{1}{4}\left[r_{n}^{2}+(2a^2+1)r_{n}-2\beta_{n}(R_{n-1}+R_{n})-n(n-1)\right].
$$
Substituting (\ref{s2}) and (\ref{s5}) into the above gives the desired result.
	\end{proof}
	\begin{theorem}
    The quantity $\sigma_n(a)=a\frac d{da}\ln\mathbb{P}(n,a)$ can be expressed in terms of the recurrence coefficient $\bt_n$ as follows:
	\be\label{ex2}
	\sigma_n(a)
	=n(n-2a^2)-4\beta_n(\beta_{n-1}+\beta_n+\beta_{n+1}-n-a^2).
	\ee
	\end{theorem}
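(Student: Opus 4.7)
The plan is to start from the identity (\ref{sig}), namely $\sigma_n(a) = -\sum_{j=0}^{n-1} R_j(a)$, and express the sum $\sum_{j=0}^{n-1} R_j$ in closed form in terms of the recurrence coefficient $\bt_n$. The key is that equation (\ref{s4}) already provides such a closed-form identity, obtained earlier from the sum rule $(S_2')$, which lets us trade the accumulated sum $\sum_{j=0}^{n-1} R_j$ for an expression involving only $r_n$, $R_{n-1}$, $R_n$, and $\bt_n$.

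Concretely, I would first rearrange (\ref{s4}) to read
\begin{equation*}
\sum_{j=0}^{n-1} R_j \;=\; 2\beta_n(R_{n-1}+R_n) - r_n^2 - 2a^2 r_n,
\end{equation*}
so that $\sigma_n(a) = -2\beta_n(R_{n-1}+R_n) + r_n^2 + 2a^2 r_n$. Then I would substitute the two ``solved'' expressions that were already derived in this section: the relation $r_n = 2\beta_n - n$ from (\ref{s2}), and the expression $R_n = 2\beta_n + 2\beta_{n+1} - 2n - 1$ from (\ref{s5}) (whence also $R_{n-1} = 2\beta_{n-1} + 2\beta_n - 2n + 1$, so the two awkward $\pm 1$ terms cancel and $R_{n-1}+R_n = 2(\beta_{n-1}+2\beta_n+\beta_{n+1}-2n)$).

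The remaining step is a purely algebraic expansion: expand $(2\beta_n-n)^2 + 2a^2(2\beta_n-n)$ and collect it with $-4\beta_n(\beta_{n-1}+2\beta_n+\beta_{n+1}-2n)$. The $-8\beta_n^2+8n\bt_n$ coming from the last term combines with the $4\bt_n^2-4n\bt_n$ from $r_n^2$ to produce $-4\bt_n^2+4n\bt_n$, which, together with the cross terms $-4\bt_n\bt_{n-1}-4\bt_n\bt_{n+1}+4a^2\bt_n$, is precisely $-4\bt_n(\bt_{n-1}+\bt_n+\bt_{n+1}-n-a^2)$. What remains from $r_n^2 + 2a^2 r_n$ are the purely $n$-dependent terms $n^2 - 2a^2 n = n(n-2a^2)$, yielding exactly (\ref{ex2}).

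There is no real conceptual obstacle here; the only thing to be careful about is the bookkeeping of signs and of the $\pm 1$ terms in $R_{n-1}$ versus $R_n$, which must cancel in $R_{n-1}+R_n$ in order for the clean symmetric answer in (\ref{ex2}) to emerge. The role of (\ref{s4}) is essential: without it one would be stuck with the unresolved sum $\sum_{j=0}^{n-1} R_j$, and the whole point is that the sum rule $(S_2')$ converts this cumulative quantity into a local one in $n$, which is exactly what allows $\sigma_n$ to be written in terms of three consecutive recurrence coefficients $\bt_{n-1}, \bt_n, \bt_{n+1}$.
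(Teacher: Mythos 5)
Your proposal is correct (the algebra checks out: $R_{n-1}+R_n=2(\beta_{n-1}+2\beta_n+\beta_{n+1}-2n)$ and the expansion of $r_n^2+2a^2r_n-2\beta_n(R_{n-1}+R_n)$ does reduce to (\ref{ex2})), and it is essentially the paper's argument: both rest on eliminating $\sum_{j=0}^{n-1}R_j$ via (\ref{s4}) and then substituting (\ref{s2}) and (\ref{s5}). The only cosmetic difference is that the paper routes through the already-derived expression (\ref{pn1}) for $\mathrm{p}(n,a)$ using the identity $\sigma_n(a)=4\mathrm{p}(n,a)-r_n+n(n-1)$ from (\ref{pn}), whereas you apply (\ref{s4}) to $\sigma_n(a)=-\sum_{j=0}^{n-1}R_j$ directly.
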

	\begin{proof}
	Using (\ref{sig}) and (\ref{pn}) we have
$$
\sigma_n(a)=4\mathrm{p}(n,a)-r_n+n(n-1).
$$
Substituting (\ref{pn1}) and (\ref{s2}) into the above, we arrive at (\ref{ex2}).
	\end{proof}
It was shown in \cite{LCF} that $\sigma_n(a)$ satisfies a nonlinear second-order ordinary differential equation, which can be transformed into the famous JMMS equation (the $\sigma$-form of a particular Painlev\'{e} V equation) under suitable double scaling as $n\rightarrow\infty$.

	\section{Quartic Freud unitary ensemble ($m=2$)}
	We now consider the $m=2$ case and the weight function $w(x)$ reads
	$$
	w(x;a)=w_0(x)\chi_{(-a,a)^{c}}(x),\qquad x \in \mathbb{R},
	$$
	where $w_0(x)$ is the quartic Freud weight $w_0(x)=\mathrm{e}^{-x^{4}}$ and $\mathrm{v_0}(x)=x^4,\;x \in \mathbb{R}$.

    Noting that
	$$
	\frac{\mathrm{v}_0'(x)-\mathrm{v}_0'(y)}{x-y}=4(x^2+xy+y^2)
	$$
	and inserting it into the definitions of $A_n(x)$ and $B_n(x)$ in (\ref{an}) and (\ref{bn}), we find
	\be\label{anz}
		A_n(x) =\frac{R_n(a)}{x^2-a^2}+4 x^2+4\bt_n+4\bt_{n+1},
	\ee
	\be\label{bnz}
		B_n(x) =\frac{x\:r_n(a)}{x^2-a^2}+4x \beta_n,
	\ee
	where use has been made of the three-term recurrence relation to simplify the results, and $R_n(a)$ and $r_n(a)$ are given by (\ref{Rnt}) and (\ref{rnt}), respectively.
	
	Substituting (\ref{anz}) and (\ref{bnz}) into ($S_1$), we obtain
	\begin{equation}\label{s11}
		r_n+r_{n+1}=R_n.
	\end{equation}
	Similarly, substituting (\ref{anz}) and (\ref{bnz}) into ($S_2'$) produces the following four identities:
	\begin{equation}\label{s12}
	a^2 r_n^2=\beta_n R_{n-1} R_n,
	\end{equation}
	\begin{equation}\label{s13}
	r_n= 4\beta_n\left(\bt_{n-1}+\bt_n+\bt_{n+1}\right)-n ,
	\end{equation}
	\be\label{s14}
r_n (a^2+2\bt_n)+\sum_{j=0}^{n-1}(\beta_j+\beta_{j+1})=\beta_n[R_{n-1}+R_n+4(\beta_{n-1}+\beta_n)(\beta_n+\beta_{n+1})],
	\ee
    \begin{align}\label{s16}
    &\:a^2r_n^2+8a^4r_n(a^2+2\bt_n)+2a^2\sum_{j=0}^{n-1}R_j\no\\
    &=\beta_n\big[8a^2R_n(a^2+\beta_{n-1}+\beta_n)+8a^2R_{n-1}(a^2+\beta_n+\beta_{n+1})-R_{n-1}R_n\big].
    \end{align}

    We are now ready to obtain the following results.
	\begin{theorem}\label{the}
	The recurrence coefficient $\bt_n$ satisfies the fourth-order nonlinear difference equation
	\begin{align}\label{de2}
	&a^2\left[4\beta_n(\bt_{n-1}+\bt_n+\bt_{n+1})-n\right]^2\no\\
&=\beta_n\left[4\left(\beta_{n-2}\beta_{n-1}+(\beta_{n-1}+\beta_n)^2+\beta_n\beta_{n+1}\right)-2n+1\right]\no\\
	&\times\left[4\left(\beta_{n-1}\beta_n+(\beta_n+\beta_{n+1})^2+\beta_{n+1}\beta_{n+2}\right)-2n-1\right].
	\end{align}
Let $w_n=\frac{4\beta_{n}}{a^2}$. Then, the difference equation (\ref{de2}) becomes the fourth-order equation in the discrete Painlev\'{e} XXXIV hierarchy \cite[(4.3)]{Cresswell}
    	\begin{align}\label{dp342}
			&w_{n}^{2}(w_{n+2}w_{n+1} + w_{n-1}w_{n-2})(w_{n}-\kappa)+w_{n}^{5}-2w_{n}^{4}(\kappa+2) -w_n(z_nw_{n+2}w_{n+1}+z_{n+1}w_{n-1}w_{n-2}) \no\\
			&+w_{n+1}w_nw_{n-1}(w_{n+2}+w_{n+1}+w_{n-1}+w_{n-2})(2w_n-\kappa) +w_n^2(w_{n+2}w_{n+1}^2+w_{n-1}^2w_{n-2}) \no\\
			&+w_{n+1}w_nw_{n-1}(w_{n+2}+w_{n+1})(w_{n-1}+w_{n-2}) +w_n^2(w_{n+1}^2+w_{n-1}^2)(3w_n-2\kappa-4) \no\\
			&-w_n(z_nw_{n+1}^2+z_{n+1}w_{n-1}^2)+w_{n+1}w_nw_{n-1}(\kappa^2-4w_n\kappa-8w_n+5w_n^2)+w_n^2(w_{n+1}^3+w_{n-1}^3) \no\\
			&+w_n^2(w_{n+1}+w_{n-1})(\kappa^2+8\kappa+3w_n^2-4w_n\kappa-8w_n) -w_n^2[(2z_n+z_{n+1})w_{n+1}+(2z_{n+1}+z_n)w_{n-1}] \no\\
			&+w_n[(2z_{n+1}+z_n(\kappa+2))w_{n+1}+(2z_n+z_{n+1}(\kappa+2))w_{n-1}]+w_n^3[-(z_n+z_{n+1})+\kappa(\kappa+8)] \no\\
			&+w_n^2[(z_n+z_{n+1})(\kappa+2)-4\kappa^2] +w_{n}[z_{n}z_{n+1}-2\kappa(z_n+z_{n+1})]+(C_3+z_n)(C_3-z_{n+1})=0,
	\end{align}
    	where $z_n=C_1+C_2 n,\; C_1=-\frac{4}{a^4},\; C_2=\frac{8}{a^{4}},\;\kappa=0$ and $C_3=\frac{4}{a^{4}}$.
	\end{theorem}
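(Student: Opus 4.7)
The plan is to derive the difference equation (\ref{de2}) from only the three simplest identities (\ref{s11})--(\ref{s13}), keeping the heavier sum-rule consequences (\ref{s14}) and (\ref{s16}) in reserve for the subsequent results on $\mathrm{p}(n,a)$ and $\sigma_n(a)$. First I would read (\ref{s13}) as a closed form for $r_n$ in three consecutive recurrence coefficients, $r_n=4\beta_n(\beta_{n-1}+\beta_n+\beta_{n+1})-n$. Combining this with the telescoping identity (\ref{s11}), $R_n=r_n+r_{n+1}$, and using $\beta_n^2+2\beta_n\beta_{n+1}+\beta_{n+1}^2=(\beta_n+\beta_{n+1})^2$, one obtains the closed form
\be\label{Rn-plan}
R_n=4\bigl[\beta_{n-1}\beta_n+(\beta_n+\beta_{n+1})^2+\beta_{n+1}\beta_{n+2}\bigr]-2n-1,
\ee
and by a single index shift the analogous expression for $R_{n-1}$. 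Inserting these expressions together with the formula for $r_n$ into the quadratic relation (\ref{s12}), $a^2 r_n^2=\beta_n R_{n-1}R_n$, produces (\ref{de2}) immediately, with no further manipulation.

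For the second assertion I would substitute $\beta_n=a^2w_n/4$ into (\ref{de2}). Under this rescaling $r_n$ becomes $(a^4/4)w_n(w_{n-1}+w_n+w_{n+1})-n$ and each bracketed $R$-factor on the right-hand side becomes $(a^4/4)$ times a quadratic expression in the $w_j$, minus $2n\pm 1$. With the stated parameter choices $z_n=(8n-4)/a^4$, $\kappa=0$ and $C_3=4/a^4$ one has $(a^4/4)z_n=2n-1$, $(a^4/4)z_{n+1}=2n+1$ and $(a^4/4)(z_n+z_{n+1})/2=2n$, so that after extracting the common factor $a^{10}/64$ the relation takes the concise schematic shape
\be\label{sketch}
\bigl[2w_n(w_{n-1}+w_n+w_{n+1})-(z_n+z_{n+1})\bigr]^2=4w_n\bigl[Y_{n-1}-z_n\bigr]\bigl[Y_n-z_{n+1}\bigr],
\ee
where $Y_n:=w_{n-1}w_n+(w_n+w_{n+1})^2+w_{n+1}w_{n+2}$.

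The main obstacle is to recognize that (\ref{sketch}) is equivalent to the Cresswell--Joshi equation (\ref{dp342}), which is presented as a single polynomial set equal to zero containing roughly two dozen monomial groups in $w_{n\pm 2},w_{n\pm 1},w_n$ and the parameters. I would organize the comparison by grading monomials first by their total degree in the $w_j$ and then by the index shift, so that each listed block in (\ref{dp342}) appears as an identifiable piece of the expansion of the difference of the two sides of (\ref{sketch}); the vanishing of $\kappa$ collapses several blocks and reduces the bookkeeping further, and the constant contribution $(C_3+z_n)(C_3-z_{n+1})$ should match the pure cross-term coming from squaring $(z_n+z_{n+1})$ on the left. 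This final verification is purely mechanical though lengthy, and once every coefficient agrees, membership of $\beta_n$ in the fourth-order level of the discrete Painlev\'e XXXIV hierarchy is established.
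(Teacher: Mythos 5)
Your derivation of (\ref{de2}) is exactly the paper's proof: you obtain the closed form for $R_n$ (and, by an index shift, $R_{n-1}$) from (\ref{s11}) and (\ref{s13}) and substitute into $a^2r_n^2=\beta_nR_{n-1}R_n$, and like the paper you leave the identification with the Cresswell--Joshi form (\ref{dp342}) as a mechanical verification under $w_n=4\beta_n/a^2$. The only blemish is a harmless normalization slip in your intermediate schematic: since $n=\tfrac{a^4}{16}(z_n+z_{n+1})$, its left-hand side should be $\bigl[4w_n(w_{n-1}+w_n+w_{n+1})-(z_n+z_{n+1})\bigr]^2$ rather than having the coefficient $2$ on the first term, which does not affect the argument.
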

	\begin{proof}
	Using (\ref{s11}) and (\ref{s13}), we have
	\begin{equation}\label{s15}
	R_n=4\left[\beta_{n-1}\beta_n+(\beta_n+\beta_{n+1})^2+\beta_{n+1}\beta_{n+2}\right]-2n-1.
	\end{equation}
	Substituting (\ref{s13}) and (\ref{s15}) into (\ref{s12}), we obtain (\ref{de2}). Equation (\ref{dp342}) follows from the given transformation.
	\end{proof}
\begin{theorem}
	The recurrence coefficient $\bt_n$ satisfies the  differential-difference equation
	$$
	a\beta_{n}^{\prime}(a)=2\beta_{n}\left[2\beta_{n-1}(\beta_{n-2}+\beta_{n-1}+\beta_{n})-2\beta_{n+1}(\beta_{n}+\beta_{n+1}+\beta_{n+2})+1\right].
	$$
	\end{theorem}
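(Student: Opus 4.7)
The plan is to combine the Toda-type identity (\ref{eq3}), $a\beta_n'(a)=\beta_n(R_{n-1}-R_n)$, with the explicit expression (\ref{s15}) for $R_n$ in terms of the recurrence coefficients, and then to algebraically simplify $R_{n-1}-R_n$ into the factored form appearing on the right-hand side of the claim.

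First I would record (\ref{s15}) and its index shift:
\begin{equation*}
R_n = 4\bigl[\beta_{n-1}\beta_n + (\beta_n+\beta_{n+1})^2 + \beta_{n+1}\beta_{n+2}\bigr] - 2n-1,\qquad
R_{n-1} = 4\bigl[\beta_{n-2}\beta_{n-1} + (\beta_{n-1}+\beta_n)^2 + \beta_n\beta_{n+1}\bigr] - 2n+1.
\end{equation*}
The constant $-2n$ cancels when taking $R_{n-1}-R_n$, leaving a $+2$ contribution plus a quartic polynomial in the $\beta_j$'s.

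Next I would regroup the quartic part symmetrically. The observation that makes everything collapse is that
\begin{equation*}
\beta_{n-2}\beta_{n-1}+(\beta_{n-1}+\beta_n)^2+\beta_n\beta_{n+1}
= \beta_{n-1}(\beta_{n-2}+\beta_{n-1}+\beta_n)+\beta_n(\beta_{n-1}+\beta_n+\beta_{n+1}),
\end{equation*}
and analogously
\begin{equation*}
\beta_{n-1}\beta_n+(\beta_n+\beta_{n+1})^2+\beta_{n+1}\beta_{n+2}
= \beta_n(\beta_{n-1}+\beta_n+\beta_{n+1})+\beta_{n+1}(\beta_n+\beta_{n+1}+\beta_{n+2}).
\end{equation*}
Subtracting cancels the common $\beta_n(\beta_{n-1}+\beta_n+\beta_{n+1})$ term, producing
\begin{equation*}
R_{n-1}-R_n = 4\bigl[\beta_{n-1}(\beta_{n-2}+\beta_{n-1}+\beta_n)-\beta_{n+1}(\beta_n+\beta_{n+1}+\beta_{n+2})\bigr]+2.
\end{equation*}

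Finally, multiplying by $\beta_n$ and invoking (\ref{eq3}) gives
\begin{equation*}
a\beta_n'(a)=2\beta_n\bigl[2\beta_{n-1}(\beta_{n-2}+\beta_{n-1}+\beta_n)-2\beta_{n+1}(\beta_n+\beta_{n+1}+\beta_{n+2})+1\bigr],
\end{equation*}
which is the claimed differential-difference equation. There is no genuine obstacle here; the only point requiring a moment of care is the telescoping rearrangement of the quartic expression, which is what makes the answer factor so cleanly. Everything else is a direct substitution into the Toda-type equation already at our disposal.
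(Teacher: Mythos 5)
Your proof is correct and follows exactly the paper's route: the paper's proof is the one-line "Substituting (\ref{s15}) into (\ref{eq3}) gives the desired result," and you have simply carried out that substitution explicitly, with the telescoping regrouping of the quartic terms done correctly. Nothing to add.
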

	\begin{proof}
	Substituting (\ref{s15}) into (\ref{eq3}) gives the desired result.
	\end{proof}
	\begin{theorem}
	The nontrivial leading coefficient $\mathrm{p}(n,a)$ can be expressed in terms of the recurrence coefficient $\bt_n$ as follows:
	$$
	\mathrm{p}(n,a)=-\frac{1}{2}n a^{2}-2\beta_ n\Big[\beta_{n-1}(\beta_{n-2}+\beta_{n-1}+2\beta_{n}+\beta_{n+1}-a^2)+(\beta_{n}+\beta_{n+1})(\beta_{n}+\beta_{n+1}-a^{2})
	+\beta_{n+1}\beta_{n+2}-\frac{n}{2}-\frac{1}{4}\Big].
    $$
    \end{theorem}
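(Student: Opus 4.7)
The plan is to follow exactly the same strategy used for Theorem~2.3 in the Gaussian case, but now using the corresponding $m=2$ sum rule identity (\ref{s14}) in place of (\ref{s4}). The starting point is again the telescoping identity (\ref{sum}), which gives
$$
\mathrm{p}(n,a) \;=\; -\sum_{j=0}^{n-1}\beta_j(a).
$$
Because (\ref{s14}) features the sum $\sum_{j=0}^{n-1}(\beta_j+\beta_{j+1})$ rather than $\sum_{j=0}^{n-1}\beta_j$, the first step I would carry out is to shift the index and use $\beta_0=0$ to write
$$
\sum_{j=0}^{n-1}(\beta_j+\beta_{j+1}) \;=\; 2\sum_{j=0}^{n-1}\beta_j \;+\; \beta_n \;=\; -2\,\mathrm{p}(n,a)+\beta_n.
$$

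Next, I would solve (\ref{s14}) for $\sum_{j=0}^{n-1}(\beta_j+\beta_{j+1})$ and equate it with the expression above, obtaining
$$
-2\,\mathrm{p}(n,a)+\beta_n \;=\; \beta_n\bigl[R_{n-1}+R_n+4(\beta_{n-1}+\beta_n)(\beta_n+\beta_{n+1})\bigr] \;-\; r_n(a^2+2\beta_n).
$$
Solving for $\mathrm{p}(n,a)$ gives a clean formula involving only $\beta_n$, $r_n$, $R_{n-1}$ and $R_n$:
$$
\mathrm{p}(n,a) \;=\; \tfrac{1}{2}r_n(a^2+2\beta_n) \;-\; \tfrac{1}{2}\beta_n\bigl[R_{n-1}+R_n+4(\beta_{n-1}+\beta_n)(\beta_n+\beta_{n+1})\bigr] \;+\; \tfrac{1}{2}\beta_n.
$$

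With this intermediate expression in hand, I would then substitute (\ref{s13}) for $r_n$ and (\ref{s15}) for $R_n$, together with the shifted version
$$
R_{n-1} \;=\; 4\bigl[\beta_{n-2}\beta_{n-1}+(\beta_{n-1}+\beta_n)^2+\beta_n\beta_{n+1}\bigr]-2n+1,
$$
to eliminate all auxiliary quantities in favor of the $\beta_j$'s. The expected obstacle is purely algebraic: the final simplification requires carefully gathering the $a^2$-terms, the $n$-terms and the many products of the form $\beta_i\beta_j$, recognizing that $R_{n-1}+R_n$ combined with $4(\beta_{n-1}+\beta_n)(\beta_n+\beta_{n+1})$ produces exactly the bracket $\beta_{n-1}(\beta_{n-2}+\beta_{n-1}+2\beta_n+\beta_{n+1})+(\beta_n+\beta_{n+1})^2+\beta_{n+1}\beta_{n+2}$ that appears in the stated result. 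This bookkeeping is tedious but purely mechanical, and at the end the $a^2$-contributions from $r_n(a^2+2\beta_n)$ regroup into the $-\tfrac{1}{2}na^2$ and the $2a^2\beta_n(\beta_{n-1}+\beta_n+\beta_{n+1})$ summands in the claimed formula, producing the expression stated in the theorem.
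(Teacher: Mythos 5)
Your proposal is correct and follows essentially the same route as the paper: it derives $\sum_{j=0}^{n-1}(\beta_j+\beta_{j+1})=-2\mathrm{p}(n,a)+\beta_n$ from (\ref{sum}), inserts this into (\ref{s14}) to get the same intermediate expression for $2\mathrm{p}(n,a)$ in terms of $r_n$, $R_{n-1}$, $R_n$, and then eliminates the auxiliary quantities via (\ref{s13}) and (\ref{s15}). The concluding algebra works out as claimed (the only slight imprecision is that the bracket $\beta_{n-1}(\beta_{n-2}+\beta_{n-1}+2\beta_n+\beta_{n+1})+(\beta_n+\beta_{n+1})^2+\beta_{n+1}\beta_{n+2}$ emerges only after also absorbing the $2\beta_n r_n$ contribution, not from $R_{n-1}+R_n+4(\beta_{n-1}+\beta_n)(\beta_n+\beta_{n+1})$ alone), so no substantive gap remains.
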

	\begin{proof}
From (\ref{sum}) we have
	\be\label{sumb}
	\sum_{j=0}^{n-1}(\bt_j+\bt_{j+1})=-2\mathrm{p}(n,a)+\bt_n.
    \ee
Inserting the above into     (\ref{s14}), we can express $\mathrm{p}(n,a)$ in terms of $r_n$ and $R_n$:
	\begin{equation}\label{pna1}
		2\mathrm{p}(n,a)=\beta_n+a^2r_n+2\bt_nr_n-\beta_n[R_{n-1}+R_n+4(\beta_{n-1}+\beta_n)(\beta_n+\beta_{n+1})],
	\end{equation}
    Substituting (\ref{s13}) and (\ref{s15}) into (\ref{pna1}), we obtain the desired result.	
	\end{proof}
	\begin{theorem}
    The quantity $\sigma_n(a)=a\frac d{da}\ln\mathbb{P}(n,a)$ can be expressed in terms of the recurrence coefficient $\bt_n$ as follows:
	\begin{align}
	\sigma_n(a) =&\:\left[4\beta_n\left(\bt_{n-1}+\bt_n+\bt_{n+1}\right)-n\right]^2+4a^2(a^2+2\beta_n)\left[4\beta_n\left(\bt_{n-1}+\bt_n+\bt_{n+1}\right)-n\right]\no\\
&-4\beta_n(a^2+\beta_{n-1}+\beta_n)\left[4\beta_{n-1}\beta_n+4(\beta_n+\beta_{n+1})^2+4\beta_{n+1}\beta_{n+2}-2n-1\right]\no\\
&-4\beta_n(a^2+\beta_n+\beta_{n+1})[4\beta_{n-2}\beta_{n-1}+4(\beta_{n-1}+\beta_{n})^2+4\beta_{n}\beta_{n+1}-2n+1].\no
	\end{align}
	\end{theorem}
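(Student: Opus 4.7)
My plan is to mimic the proof of Theorem 2.4: the identity (\ref{sig}) reduces the evaluation of $\sigma_n(a)$ to the partial sum $\sum_{j=0}^{n-1}R_j(a)$, and one of the sum-rule consequences already derived contains this sum explicitly. So I would first apply (\ref{sig}) to write $\sigma_n(a)=-\sum_{j=0}^{n-1}R_j(a)$, reducing the problem to evaluating this partial sum in closed form in terms of $\beta_n$ and the auxiliary quantities $r_n$, $R_n$, $R_{n-1}$.

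Among the identities (\ref{s11})--(\ref{s16}), equation (\ref{s16}) is the only one containing $\sum_{j=0}^{n-1}R_j$. I would therefore solve (\ref{s16}) algebraically for this sum, expressing $2a^2\sum_{j=0}^{n-1}R_j$ as an explicit combination of $r_n$, $R_n$, $R_{n-1}$, $\beta_{n-1}$, $\beta_n$, $\beta_{n+1}$, together with a single $-\beta_n R_{n-1}R_n$ cross-term. The crucial simplification is then provided by (\ref{s12}), which gives $\beta_n R_{n-1}R_n=a^2 r_n^2$; inserting this turns the cross-term into an extra $\tfrac{1}{2}r_n^2$ that combines cleanly with the $r_n^2/2$ already present. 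After dividing through by $2a^2$ and changing sign, I expect to arrive at
\[
\sigma_n(a)=r_n^2+4a^2(a^2+2\beta_n)r_n-4\beta_n(a^2+\beta_{n-1}+\beta_n)R_n-4\beta_n(a^2+\beta_n+\beta_{n+1})R_{n-1}.
\]

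The final step is pure substitution: replace $r_n$ using (\ref{s13}), $R_n$ using (\ref{s15}), and $R_{n-1}$ using (\ref{s15}) with $n$ shifted to $n-1$. One then reads off term-by-term the four bracketed factors that appear in the statement of the theorem. The only anticipated obstacle is algebraic bookkeeping, since the polynomial expressions in the recurrence coefficients are considerably longer than in the GUE case; no new structural idea beyond the template of Theorem 2.4 is needed, and it is the clean cancellation $\beta_n R_{n-1}R_n=a^2r_n^2$ supplied by (\ref{s12}) that makes the $R_{n-1}R_n$ term disappear and yields the stated compact form.
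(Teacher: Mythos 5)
Your proposal is correct and follows essentially the same route as the paper: both apply (\ref{sig}) to reduce $\sigma_n(a)$ to $-\sum_{j=0}^{n-1}R_j$, solve (\ref{s16}) for that sum, use (\ref{s12}) to replace $\beta_nR_{n-1}R_n$ by $a^2r_n^2$, and then substitute (\ref{s13}) and (\ref{s15}) (the latter also with $n\mapsto n-1$) to obtain the stated expression. Your intermediate identity for $\sigma_n(a)$ coincides with the one displayed in the paper's proof, so no further comment is needed.
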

	\begin{proof}
From (\ref{sig})  and (\ref{s16}) we have
	$$
	\sigma_n(a)=r_n^2+4a^2r_n(a^2+2\beta_n)-4\beta_n\left[R_n(a^2+\beta_{n-1}+\beta_n)+R_{n-1}(a^2+\beta_n+\beta_{n+1})\right],
	$$
where use has been made of (\ref{s12}) to eliminate $\beta_n R_{n-1} R_n$. Substituting  (\ref{s13}) and (\ref{s15}) into the above, we establish the theorem.
	\end{proof}	
	\section{Sextic Freud unitary ensemble $(m=3)$}
	In this case, the weight function $w(x)$ now is
	$$
	w(x;a)=w_0(x)\chi_{(-a,a)^{c}}(x),\qquad x \in \mathbb{R},
	$$
	where $w_0(x)$ is the sextic Freud weight $w_0(x)=\mathrm{e}^{-x^{6}},\; x \in \mathbb{R}.$
	
	Similarly as in the $m=1$ and $m=2$ cases, we find
	\begin{equation}\label{anz3}
	A_n(x)=\frac{R_n(a)}{x^2-a^2}+6 x^4+6 x^2(\beta_n+\beta_{n+1})+ Q_n(a),
	\end{equation}
	\begin{equation}\label{bnz3}
	B_n(x)  =\frac{x\:r_n(a)}{x^2-a^2}+6 x^3\beta_n+q_n(a)x,
	\end{equation}
	where $R_n(a)$ and $r_n(a)$ are given by (\ref{Rnt}) and (\ref{rnt}), and
	\begin{equation}\label{s29}
	Q_n(a):=6\left[\beta_{n-1}\beta_n+(\beta_n+\beta_{n+1})^2+\beta_{n+1}\beta_{n+2}\right],
	\end{equation}
	\begin{equation}\label{s24}
	q_n(a):=6 \beta_n(\beta_{n-1}+\beta_n+\beta_{n+1}).
	\end{equation}

	Substituting (\ref{anz3}) and (\ref{bnz3}) into ($S_1$), we obtain
	\begin{equation}\label{s21}
	r_n+r_{n+1}= R_n.
	\end{equation}
	Similarly, substituting (\ref{anz3}) and (\ref{bnz3}) into ($S_2'$) gives the following five identities:
	\begin{equation}\label{s23}
	a^2	r_n^2=\beta_n R_{n-1} R_n,
	\end{equation}
	\begin{equation}\label{s25}
	n+r_n+2q_n\beta_n=\beta_n\left[Q_{n-1}+Q_n+6(\beta_{n-1}+\beta_n)(\beta_n+\beta_{n+1})\right],
	\end{equation}
    \begin{equation}\label{s26}
    q_n^2+6r_n(a^2+2\beta_n)+6\sum_{j=0}^{n-1}(\beta_j+\beta_{j+1})=6\beta_n\left[R_{n-1}+R_n+Q_n(\beta_{n-1}+\beta_n)+Q_{n-1}(\beta_n+\beta_{n+1})\right],
    \end{equation}
    \begin{equation}\label{s27}
    2r_n(3a^4+q_n+6a^2\beta_n)+\sum_{j=0}^{n-1}Q_j=\beta_n\left[Q_{n-1}Q_n+6R_n(a^2+\beta_{n-1}+\beta_n)+6R_{n-1}(a^2+\beta_n+\beta_{n+1})\right],
    \end{equation}
\begin{align}\label{s28}
&\:a^2r_n^2+4a^4r_n(3a^4+q_n+6a^2\beta_n)+2a^2\sum_{j=0}^{n-1}R_j\no\\
&=\beta_n\left[2a^2R_n\left(Q_{n-1}+6a^2(a^2+\beta_{n-1}+\beta_n)\right)+2a^2R_{n-1}\left(Q_n+6a^2(a^2+\beta_n+\beta_{n+1})\right)-R_{n-1}R_n\right].
	\end{align}
With these ingredients in hand, we are now ready to derive the following results.
	\begin{theorem}
	The recurrence coefficient $\bt_n$ satisfies the sixth-order nonlinear difference equation
\begin{align}\label{de3}
	&a^2\left[6\beta_n\left(\beta_{n-2}\beta_{n-1}+(\beta_n+\beta_{n+1})^2+\beta_{n-1}(\beta_{n-1}+2\beta_n+\beta_{n+1})+\beta_{n+1}\beta_{n+2}\right)-n\right]^2 \no\\
&=\beta_n\big[6\beta_{n-1}\left(\beta_{n-3}\beta_{n-2}+(\beta_{n-1}+\beta_n)^2+\beta_{n-2}(\beta_{n-2}+2\beta_{n-1}+\beta_n)+\beta_n\beta_{n+1}\right) \no\\
	&+6\beta_n\left(\beta_{n-2}\beta_{n-1}+(\beta_n+\beta_{n+1})^2+\beta_{n-1}(\beta_{n-1}+2\beta_n+\beta_{n+1})+\beta_{n+1}\beta_{n+2}\right)-2n+1\big] \no\\
	&\times\big[6\beta_{n}\left(\beta_{n-2}\beta_{n-1}+(\beta_{n}+\beta_{n+1})^2+\beta_{n-1}(\beta_{n-1}+2\beta_{n}+\beta_{n+1})+\beta_{n+1}\beta_{n+2}\right) \no\\
	&+6\beta_{n+1}\left(\beta_{n-1}\beta_{n}+(\beta_{n+1}+\beta_{n+2})^2+\beta_{n}(\beta_{n}+2\beta_{n+1}+\beta_{n+2})+\beta_{n+2}\beta_{n+3}\right)-2n-1\big].
	\end{align}
	\end{theorem}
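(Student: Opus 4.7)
The plan is to mirror the proof strategy used in Theorems \ref{th1} and \ref{the}: combine a small subset of the identities (\ref{s21})--(\ref{s28}) so as to eliminate all the auxiliary quantities $r_n$, $R_n$, $Q_n$, $q_n$ in favor of the recurrence coefficient $\beta_n$, and then feed the resulting closed form for $r_n$ into the key identity (\ref{s23}). Only three identities will be needed here: (\ref{s21}), (\ref{s23}) and (\ref{s25}). The remaining identities (\ref{s26})--(\ref{s28}) will be reserved for the subsequent theorems on the differential-difference equation, the formula for $\mathrm{p}(n,a)$ and the expression for $\sigma_n(a)$.

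First, I would solve (\ref{s25}) for $r_n$, obtaining
$$
r_n = \beta_n\bigl\{Q_{n-1} + Q_n + 6(\beta_{n-1}+\beta_n)(\beta_n+\beta_{n+1}) - 2q_n\bigr\} - n,
$$
and then substitute the explicit formulas (\ref{s29}) and (\ref{s24}) for $Q_{n-1}, Q_n$ and $q_n$. After expansion, the contribution $-12\beta_n(\beta_{n-1}+\beta_n+\beta_{n+1})$ coming from $-2q_n$ partially cancels the cross-term $6(\beta_{n-1}+\beta_n)(\beta_n+\beta_{n+1})$, and the residual polynomial should collapse to the compact form
$$
r_n = 6\beta_n\bigl[\beta_{n-2}\beta_{n-1} + (\beta_n+\beta_{n+1})^2 + \beta_{n-1}(\beta_{n-1}+2\beta_n+\beta_{n+1}) + \beta_{n+1}\beta_{n+2}\bigr] - n
$$
which is precisely the quantity being squared on the left-hand side of (\ref{de3}). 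Note that $r_n$ now involves the five consecutive coefficients $\beta_{n-2},\ldots,\beta_{n+2}$.

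Next, I would invoke (\ref{s21}) to write $R_n = r_n + r_{n+1}$ and $R_{n-1} = r_{n-1} + r_n$. Index-shifting the formula for $r_n$ produces exactly the two brackets on the right-hand side of (\ref{de3}), with $R_{n-1}$ reaching down to $\beta_{n-3}$ and $R_n$ reaching up to $\beta_{n+3}$, and with the constant terms $-(n-1)-n = -2n+1$ and $-n-(n+1) = -2n-1$ respectively matching those in (\ref{de3}). Substituting these expressions into (\ref{s23}), namely $a^2 r_n^2 = \beta_n R_{n-1} R_n$, then yields the sixth-order nonlinear difference equation (\ref{de3}).

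The only delicate step is the polynomial bookkeeping in the first display above: with five consecutive $\beta$'s interacting through mixed quadratic and cubic terms, getting each monomial's coefficient correct is what makes the cancellation close up into the compact form quoted in the theorem. Once this identity is verified, the remainder of the argument is purely mechanical index shifting, and the sixth-order character of (\ref{de3}) falls out automatically from the range of $\beta$-indices that appear in $R_{n-1}$ and $R_n$.
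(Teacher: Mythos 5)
Your proposal is correct and follows essentially the same route as the paper: solve (\ref{s25}) for $r_n$ using (\ref{s29}) and (\ref{s24}) to obtain the closed form (\ref{s30}), use (\ref{s21}) to get $R_{n-1}$ and $R_n$ as in (\ref{s31}), and substitute into (\ref{s23}). Your explicit check that the $-2q_n$ term cancels against the cross-term to leave $\beta_{n-1}(\beta_{n-1}+2\beta_n+\beta_{n+1})$ is the correct bookkeeping, and it is exactly the step the paper performs implicitly.
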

	\begin{proof}
  	From (\ref{s25}) and with the aid of (\ref{s29}) and (\ref{s24}), we find
  	\begin{equation}\label{s30}
  	r_n=6\beta_n\left[\beta_{n-2}\beta_{n-1}+(\beta_n+\beta_{n+1})^2+\beta_{n-1}(\beta_{n-1}+2\beta_n+\beta_{n+1})+\beta_{n+1}\beta_{n+2}\right]-n.
  	\end{equation}
  	Substituting (\ref{s30}) into (\ref{s21}), we have
  	\begin{align}\label{s31}
  	R_{n}=&\:6\beta_{n}\left[\beta_{n-2}\beta_{n-1}+(\beta_{n}+\beta_{n+1})^2+\beta_{n-1}(\beta_{n-1}+2\beta_{n}+\beta_{n+1})+\beta_{n+1}\beta_{n+2}\right]\no\\
  	&+6\beta_{n+1}\left[\beta_{n-1}\beta_{n}+(\beta_{n+1}+\beta_{n+2})^2+\beta_{n}(\beta_{n}+2\beta_{n+1}+\beta_{n+2})+\beta_{n+2}\beta_{n+3}\right]-2n-1.
    \end{align}
  	The theorem is established by substituting (\ref{s30}) and (\ref{s31}) into (\ref{s23}).
	\end{proof}
	\begin{remark}
Let $w_n=4\beta_{n}/a^2$. Then, the difference equation	(\ref{de3}) can be transformed into a particular sixth-order equation in the discrete Painlev\'{e} XXXIV hierarchy \cite{Cresswell}. The general form of this equation is very complicated and we decide not to write it down.
	\end{remark}
\begin{theorem}
	The recurrence coefficient $\bt_n$ satisfies the  differential-difference equation
	\begin{align} a\beta_{n}^{\prime}(a)=&2\beta_{n}\big[3\beta_{n-1}\left(\beta_{n-3}\beta_{n-2}+(\beta_{n-1}+\beta_{n})^2+\beta_{n-2}(\beta_{n-2}+2\beta_{n-1}+\beta_{n})+\beta_{n}\beta_{n+1}\right)
\no\\
&-3\beta_{n+1}\left(\beta_{n-1}\beta_{n}+(\beta_{n+1}+\beta_{n+2})^2+\beta_{n}(\beta_{n}+2\beta_{n+1}+\beta_{n+2})+\beta_{n+2}\beta_{n+3}\right)+1\big].\no
	\end{align}
	\end{theorem}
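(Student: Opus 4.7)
The plan is to mirror the structure of the previous differential-difference-equation proofs (Theorems 2.2 and 3.2), using only the Toda-type relation (\ref{eq3}), namely $a\beta_n'(a)=\beta_n(R_{n-1}(a)-R_n(a))$, together with the explicit expression (\ref{s31}) for $R_n$ already derived in the course of proving the sixth-order difference equation. The only work to do is the index shift $n\mapsto n-1$ in (\ref{s31}) and a careful bookkeeping of the resulting cancellation.

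First I would write $R_n$ in the schematic form
\[
R_n = 6\beta_n A_n + 6\beta_{n+1} B_n - 2n - 1,
\]
where $A_n$ is the five-$\beta$ bracket $\beta_{n-2}\beta_{n-1}+(\beta_n+\beta_{n+1})^2+\beta_{n-1}(\beta_{n-1}+2\beta_n+\beta_{n+1})+\beta_{n+1}\beta_{n+2}$ appearing with the $\beta_n$ factor, and $B_n$ is the analogous five-$\beta$ bracket appearing with $\beta_{n+1}$. Shifting $n\mapsto n-1$ gives
\[
R_{n-1} = 6\beta_{n-1} A_{n-1} + 6\beta_n B_{n-1} - 2(n-1) - 1.
\]
The key observation is that $B_{n-1}=A_n$ identically (both equal the same polynomial in $\beta_{n-2},\beta_{n-1},\beta_n,\beta_{n+1},\beta_{n+2}$), so the cross terms cancel cleanly.

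Subtracting yields
\[
R_{n-1} - R_n \;=\; 6\beta_{n-1} A_{n-1} - 6\beta_{n+1} B_n + 2,
\]
and multiplying by $\beta_n$ delivers exactly the claimed formula
$a\beta_n'(a)=2\beta_n\bigl[3\beta_{n-1}A_{n-1}-3\beta_{n+1}B_n+1\bigr]$, once one identifies $A_{n-1}$ with the first long bracket (involving $\beta_{n-3},\ldots,\beta_{n+1}$) and $B_n$ with the second long bracket (involving $\beta_{n-1},\ldots,\beta_{n+3}$) in the statement.

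There is no real obstacle here; the entire proof is a three-line calculation once (\ref{s31}) is in hand. The only place where one must be attentive is the identity $B_{n-1}=A_n$ that is responsible for the middle-term cancellation, because otherwise one would end up with a far messier expression. In the write-up I would simply state: substitute (\ref{s31}) and its index shift into (\ref{eq3}), observe the cancellation, and collect terms.
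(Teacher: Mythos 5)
Your proposal is correct and is exactly the paper's argument: the paper's proof is the one-line ``Substituting (\ref{s31}) into (\ref{eq3}) gives the desired result,'' and your computation (including the key cancellation $B_{n-1}=A_n$ of the middle terms) is just that substitution carried out explicitly.
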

	\begin{proof}
	Substituting (\ref{s31}) into (\ref{eq3}) gives the desired result.
	\end{proof}
	\begin{theorem}
	The nontrivial leading coefficient $\mathrm{p}(n,a)$ can be expressed in terms of the recurrence coefficient $\bt_n$ as follows:
	$$
	\mathrm{p}(n,a)=\frac{1}{12}q_n^2+\frac{1}{2}a^2 r_n-\frac{1}{2}\beta_n\left[r_{n-1}+r_{n+1}+Q_n(\beta_{n-1}+\beta_n)+Q_{n-1}(\beta_n+\beta_{n+1})-1\right],
	$$
	where $q_n$, $Q_n$ and $r_n$ are given by (\ref{s24}), (\ref{s29}) and (\ref{s30}), respectively.
    \end{theorem}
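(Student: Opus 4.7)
The plan is to mimic the structure of the proof in the quartic case (Theorem for $\mathrm{p}(n,a)$ in Section 3), now using the sum‐containing identity (\ref{s26}) together with the sum‐to‐leading‐coefficient identity (\ref{sumb}) and the $(S_1)$ relation (\ref{s21}).

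First I would recall that (\ref{sumb}), namely $\sum_{j=0}^{n-1}(\beta_j+\beta_{j+1})=-2\mathrm{p}(n,a)+\beta_n$, follows immediately from (\ref{sum}) and $\beta_0=0$, and therefore is available in the sextic setting as well. Next, I would substitute (\ref{sumb}) into the compatibility identity (\ref{s26}) to eliminate the summation on its left-hand side. Solving the resulting linear equation for $\mathrm{p}(n,a)$ yields
\begin{equation*}
\mathrm{p}(n,a)=\frac{1}{12}q_n^2+\frac{1}{2}a^2 r_n+\beta_n r_n+\frac{1}{2}\beta_n-\frac{1}{2}\beta_n\bigl[R_{n-1}+R_n+Q_n(\beta_{n-1}+\beta_n)+Q_{n-1}(\beta_n+\beta_{n+1})\bigr].
\end{equation*}
The third step is to eliminate $R_{n-1}$ and $R_n$ in favor of $r_{n-1}, r_n, r_{n+1}$ by using (\ref{s21}) at indices $n-1$ and $n$, which give $R_{n-1}=r_{n-1}+r_n$ and $R_n=r_n+r_{n+1}$. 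Substituting these into the bracketed expression produces a term $-\frac{1}{2}\beta_n\cdot 2r_n=-\beta_n r_n$, which cancels the $+\beta_n r_n$ appearing outside the bracket. What remains telescopes into exactly the form claimed, with the $+\frac12\beta_n$ absorbed as the $-1$ inside the bracket.

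I do not expect any real obstacle: every ingredient—(\ref{s26}), (\ref{sumb}), and (\ref{s21})—is already established in the paper, and the argument is a single linear algebraic manipulation. The only point requiring a little care is the bookkeeping of the coefficient of $r_n$, i.e., verifying that the terms $\beta_n r_n$ outside the bracket and the $-\frac12\beta_n\cdot 2r_n$ arising from replacing $R_{n-1}+R_n$ exactly cancel, and likewise that the additive $\frac12\beta_n$ correctly reassembles as the $-1$ inside the bracket multiplied by $-\frac12\beta_n$. Notice that we do not need to invoke the explicit expressions (\ref{s30}) for $r_n$ or (\ref{s31}) for $R_n$: the theorem is stated in terms of $q_n$, $Q_n$, and $r_n$ as defined, so the proof can stop as soon as all $R_j$'s have been rewritten via (\ref{s21}).
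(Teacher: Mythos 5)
Your proposal is correct and follows exactly the paper's own (very terse) proof: insert (\ref{sumb}) into (\ref{s26}), solve for $\mathrm{p}(n,a)$, and use (\ref{s21}) at indices $n-1$ and $n$ to trade $R_{n-1}+R_n$ for $r_{n-1}+2r_n+r_{n+1}$, whereupon the $\beta_n r_n$ terms cancel and the leftover $\tfrac12\beta_n$ becomes the $-1$ inside the bracket. The bookkeeping you describe checks out.
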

	\begin{proof}
Inserting (\ref{sumb}) into     (\ref{s26}) and using (\ref{s21}), we establish the theorem.
	\end{proof}
	\begin{theorem}
    The quantity $\sigma_n(a)=a\frac d{da}\ln\mathbb{P}(n,a)$ can be expressed in terms of the recurrence coefficient $\bt_n$ as follows:
	\begin{align}
	\sigma_{n}(a)=&\:r_n^2+2a^2r_n(3a^4+q_n+6a^2\beta_n)
	-\beta_nR_n\left[Q_{n-1}+6a^2(a^2+\beta_{n-1}+\beta_n)\right]\no\\
&-\beta_nR_{n-1}\left[Q_n+6a^2(a^2+\beta_n+\beta_{n+1})\right],\no
	\end{align}
	where $q_n$, $Q_n$, $r_n$ and $R_n$ are given by (\ref{s24}), (\ref{s29}), (\ref{s30}) and (\ref{s31}), respectively.
	\end{theorem}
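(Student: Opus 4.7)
The plan is to mirror the derivation already used in the $m=2$ case. The starting point is the identity (\ref{sig}), which expresses $\sigma_n(a) = -\sum_{j=0}^{n-1} R_j(a)$, so the entire task reduces to eliminating this sum of auxiliary quantities in favour of the closed-form data at level $n$.

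The crucial observation is that the sum $\sum_{j=0}^{n-1} R_j$ appears linearly in the compatibility identity (\ref{s28}). I would therefore solve (\ref{s28}) for $2a^2\sum_{j=0}^{n-1}R_j$, which yields an algebraic expression for $-2a^2\sigma_n(a)$ in terms of $r_n$, $q_n$, $R_{n-1}$, $R_n$, $Q_{n-1}$, $Q_n$, $\beta_{n-1}$, $\beta_n$, $\beta_{n+1}$ and the product $\beta_n R_{n-1} R_n$. Dividing by $-2a^2$ gives $\sigma_n(a)$ with a leading term $\tfrac{1}{2}r_n^2$, the bracketed $R_n$ and $R_{n-1}$ contributions already in the desired form, and an awkward residual $\beta_n R_{n-1}R_n/(2a^2)$. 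This last term is exactly where the identity (\ref{s23}), $a^2 r_n^2 = \beta_n R_{n-1}R_n$, is deployed: it trades $\beta_n R_{n-1} R_n/(2a^2)$ for $r_n^2/2$, which combines with the leading $\tfrac12 r_n^2$ to produce the coefficient $r_n^2$ shown in the statement.

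After this substitution the formula matches the theorem verbatim, and no further simplification is required, since the expressions for $q_n$, $Q_n$, $r_n$, $R_n$ purely in terms of $\beta_n$-values are already recorded in (\ref{s24}), (\ref{s29}), (\ref{s30}), (\ref{s31}). I do not anticipate any conceptual obstacle; the only care needed is in the bookkeeping of the various $a^2$ factors when rearranging (\ref{s28}), since the coefficient of $r_n(3a^4+q_n+6a^2\beta_n)$ there is $4a^4$ but only $2a^2$ in the target expression for $\sigma_n(a)$. This discrepancy is precisely absorbed by the division by $-2a^2$ and the use of (\ref{s23}).
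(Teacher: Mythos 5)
Your proposal is correct and is exactly the paper's argument: the authors likewise combine (\ref{sig}) with (\ref{s28}) to eliminate $\sum_{j=0}^{n-1}R_j$ and then invoke (\ref{s23}) to convert the residual $\beta_nR_{n-1}R_n/(2a^2)$ into $r_n^2/2$, yielding the stated coefficient $r_n^2$. Your bookkeeping of the $a^2$ factors and signs checks out.
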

	\begin{proof}
Using (\ref{sig}) and (\ref{s28}), and with the aid of (\ref{s23}), we obtain the desired result.
	\end{proof}	
In the end, we would like to point out that one may also study the large $n$ asymptotic behavior of the recurrence coefficient $\bt_n(a)$, the nontrivial leading coefficient $\mathrm{p}(n,a)$, the Hankel determinant $D_n(a)$ and the gap probability $\mathbb{P}(n,a)$ as in \cite{MW}. We will leave this problem to a future investigation.

\section{Discussion}
In this paper, by considering the gap probability problem of the Freud unitary ensemble, we have established the relationship between the recurrence coefficients of orthogonal polynomials with respect to the weight (\ref{weight}) and the discrete Painlev\'{e} XXXIV hierarchy for the $m=1, 2$ and $3$ cases. For the higher $m>3$ cases, we conjecture that the recurrence coefficient $\bt_n$, with a minor change of variables (might be also $w_n=4\beta_{n}/a^2$), satisfies the $(2m)$th-order equation in the discrete Painlev\'{e} XXXIV hierarchy \cite{Cresswell}. It was also pointed out in \cite{Cresswell} that the discrete Painlev\'{e} XXXIV hierarchy is linked to the discrete Painlev\'{e} II hierarchy by Miura transformations.
	
	\section*{Declaration of competing interest}
The authors declare that they have no known competing financial
interests or personal relationships that could have appeared to influence
the work reported in this paper.
	\section*{Acknowledgements}
	This work was partially supported by the National Natural Science Foundation of China under grant number 12001212, by the Fundamental Research Funds for the Central Universities under grant number ZQN-902 and by the Scientific Research Funds of Huaqiao University under grant number 17BS402.
\section*{Data availability}
No data was used for the research described in the article.
	
\end{document}